\documentclass[a4paper,UKenglish,cleveref, autoref, thm-restate]{lipics-v2021}
\usepackage{amsmath,amsfonts,amssymb,amsthm}
\usepackage{mathbbol}
\usepackage{amsthm}
\usepackage{graphicx,color}
\usepackage{boxedminipage}
\usepackage[ruled, linesnumbered, boxed,lined]{algorithm2e}
\usepackage{algorithmicx}
\usepackage{framed}
\usepackage{thmtools}
\usepackage{thm-restate}
\usepackage{xspace}
\usepackage{todonotes}
\usepackage{mathtools}

\usepackage[
                 ]{hyperref}

\usepackage{xifthen}
\usepackage{tabularx}
\usepackage{arydshln}
\usetikzlibrary{calc}

\newcommand{\Oh}{\mathcal{O}}

\DeclareMathOperator{\operatorClassP}{{\sf P}}
\newcommand{\classP}{\ensuremath{\operatorClassP}}
\DeclareMathOperator{\operatorClassNP}{{\sf NP}}
\newcommand{\classNP}{\ensuremath{\operatorClassNP}}

\newcommand{\zfill}{Z_{\mathsf{fill}}}
\newcommand{\uiso}{U_{\mathsf{iso}}}

\crefname{invar}{invariant}{invariants}
\crefname{ineq}{inequality}{inequalities}
\crefname{constr}{constraint}{constraints}
\crefname{tbl}{table}{tables}
\crefname{lem}{lemma}{lemmata}
\crefname{lemma}{lemma}{lemmata}
\crefname{theorem}{theorem}{theorems}
\crefname{cond}{condition}{conditions}

%
%
%

\newcommand{\pname}{\textsc}
\newcommand{\ProblemFormat}[1]{\pname{#1}}
\newcommand{\ProblemIndex}[1]{\index{problem!\ProblemFormat{#1}}}
\newcommand{\ProblemName}[1]{\ProblemFormat{#1}\ProblemIndex{#1}{}\xspace}

\newcommand{\probMedian}{\ProblemName{$k$-median}} 
\newcommand{\probMeans}{\ProblemName{$k$-means}} 
\newcommand{\probCenter}{\ProblemName{$k$-center}} 
\newcommand{\probMedianFacility}{\ProblemName{$k$-median Facility Location}}  
\newcommand{\probDominating}{\ProblemName{Dominating set}}     
\newcommand{\probPM}{\ProblemName{Minimum Weight Perfect Matching}}
\newcommand{\probSetC}{\ProblemName{Set Cover}} 
\newcommand{\probDeltaSetC}{\ProblemName{$\Delta$-Set Cover}} 
\newcommand{\probSeCoCo}{\ProblemName{Set Cover Conjecture}} 
\newcommand{\probMWSetC}{\ProblemName{Minimum weighted $k$-set cover}}

\newcommand{\OPT}{\mathsf{OPT}}
\newcommand{\cost}{\mathsf{cost}}
\newcommand{\kmed}{\mathsf{kmed}}
\newcommand{\mm}{\mathsf{mm}}
\newcommand{\red}[1]{{\color{red}#1}}
\newcommand{\poly}{\mathsf{poly}}

\makeatletter

\makeatother

 
%
%
%
%

\newlength{\RoundedBoxWidth}
\newsavebox{\GrayRoundedBox}
\newenvironment{GrayBox}[1]%
   {\setlength{\RoundedBoxWidth}{.93\textwidth}
    \def\boxheading{#1}
    \begin{lrbox}{\GrayRoundedBox}
       \begin{minipage}{\RoundedBoxWidth}}%
   {   \end{minipage}
    \end{lrbox}
    \begin{center}
    \begin{tikzpicture}%
       \node(Text)[draw=black!20,fill=white,rounded corners,%
             inner sep=2ex,text width=\RoundedBoxWidth]%
             {\usebox{\GrayRoundedBox}};
        \coordinate(x) at (current bounding box.north west);
        \node [draw=white,rectangle,inner sep=3pt,anchor=north west,fill=white] 
        at ($(x)+(6pt,.75em)$) {\boxheading};
    \end{tikzpicture}
    \end{center}}     

\newenvironment{defproblemx}[2][]{\noindent\ignorespaces%
                                \FrameSep=6pt%
                                \parindent=0pt%
                \vspace*{-1.5em}
                \ifthenelse{\isempty{#1}}{%
                  \begin{GrayBox}{\textsc{#2}}%
                }{%
                  \begin{GrayBox}{\textsc{#2} parameterized by~{#1}}%
                }
                \begin{tabular*}{\textwidth}{@{\hspace{.1em}} >{\itshape} p{1.8cm} p{0.8\textwidth} @{}}%
            }{
                \end{tabular*}%
                \end{GrayBox}%
                \ignorespacesafterend
            }

\newcommand{\defproblema}[3]{
  \begin{defproblemx}{#1}
    Input:  & #2 \\
    Task: & #3
  \end{defproblemx}
}%



\bibliographystyle{plainurl}

\title{Exact Exponential Algorithms for Clustering  Problems} 

\titlerunning{} 

\author{Fedor V. Fomin}{Department of Informatics, University of Bergen, Norway}{Fedor.Fomin@uib.no}{https://orcid.org/0000-0003-1955-4612}{}
\author{Petr A. Golovach}{Department of Informatics, University of Bergen, Norway}{Petr.Golovach@uib.no }{https://orcid.org/0000-0002-2619-2990}{}

\author{Tanmay Inamdar \footnote[1]{Part of this work was done when the two authors were visiting IMSc, Chennai.}}{Department of Informatics, University of Bergen, Norway}{Tanmay.Inamdar@uib.no}{}{}

\author{Nidhi Purohit \footnotemark[1]{}}{Department of Informatics, University of Bergen, Norway}{Nidhi.Purohit@uib.no}{}{}

\author{Saket Saurabh}{The Institute of Mathematical Sciences, HBNI, Chennai, India and Department of Informatics, University of Bergen, Norway}{saket@imsc.res.in}{}{}

\authorrunning{ }

\Copyright{Fedor V. Fomin, Petr A. Golovach, Tanmay Inamdar, Nidhi Purohit, Saket Saurabh}

\ccsdesc[500]{Theory of computation~Facility location and clustering}
\ccsdesc[500]{Theory of computation~Exact algorithms}

\funding{The research leading to these results has received funding from the Research Council of Norway via the project  BWCA
(grant no. 314528) and the European Research Council (ERC) via grant LOPPRE, reference 819416.}

\keywords{clustering, $k$-median, $k$-means, exact algorithms}

\category{} 

\relatedversion{} 

\supplement{}


\acknowledgements{}

\nolinenumbers 

\hideLIPIcs  

\EventEditors{John Q. Open and Joan R. Access}
\EventNoEds{2}
\EventLongTitle{42nd Conference on Very Important Topics (CVIT 2016)}
\EventShortTitle{CVIT 2016}
\EventAcronym{CVIT}
\EventYear{2016}
\EventDate{December 24--27, 2016}
\EventLocation{Little Whinging, United Kingdom}
\EventLogo{}
\SeriesVolume{42}
\ArticleNo{23}

\begin{document}

\maketitle

\begin{abstract}
In this paper we initiate a systematic study of exact algorithms for some of the well known clustering problems, namely \probMedian and \probMeans. In \probMedian, the input consists of a set $X$ of $n$ points belonging to a metric space, and the task is to select a subset $C \subseteq X$ of $k$ points as \emph{centers}, such that the sum of the distances of every point to its nearest center is minimized. In \probMeans, the objective is to minimize the sum of \emph{squares} of the distances instead. It is easy to design an algorithm running in time $\max_{k\leq n} {n \choose k} n^{\Oh(1)} = \Oh^*(2^n)$ (here, $\Oh^*(\cdot)$ notation hides polynomial factors in $n$). In this paper we design first non-trivial exact algorithms for these problems. In particular, we obtain an $\Oh^*((1.89)^n)$ time \emph{exact} algorithm for \probMedian that works for any value of $k$. Our algorithm is quite general in that it does not use any properties of the underlying (metric) space -- it does not even require the distances to satisfy the triangle inequality. In particular, the same algorithm also works for $k$-\textsc{Means}. We complement this result by showing that the running time of our algorithm is asymptotically optimal, up to the base of the exponent. That is, unless the Exponential Time Hypothesis fails, there is no algorithm for these problems running in time $2^{o(n)} \cdot n^{\Oh(1)}$.
 
Finally, we consider the ``facility location'' or ``supplier'' versions of these clustering problems, where, in addition to the set $X$ we are additionally given a set of $m$ candidate centers (or facilities) $F$, and objective is to find a subset of $k$ centers from $F$. The goal is still to minimize the $k$-\textsc{Median}/$k$-\textsc{Means}/$k$-\textsc{Center} objective. For these versions we give a $\Oh(2^n (mn)^{\Oh(1)})$ time algorithms using subset convolution. We complement this result by showing that, under the Set Cover Conjecture, the ``supplier'' versions of these problems do not admit an exact algorithm running in time $2^{(1-\epsilon) n} (mn)^{\Oh(1)}$.

\end{abstract}

\newpage

\section{Introduction}\label{sec:intro}

Clustering is a fundamental area in the domain of optimization problems with numerous applications. In this paper, we focus on  some of the most fundamental problems in the clustering literature, namely \probMedian, \probMeans, and \probCenter. We formally define the optimization version \probMedian.

\defproblema{\probMedian}%
{Given a metric space $(X,d)$, where $X=\{x_1,\ldots,x_n\}$ is a collection of $n$ points, with distance function $d$ on $X$ and a positive integer $k$.}%
{Find a pair $(C,P)$, where $C=\{c_1,\ldots,c_k\} \subseteq X$ is a set of \emph{centers} and $P$ is a partition of $X$ into $k$ subsets $\{X_1,\ldots,X_k\}$ (clusters). Here, $X_i$ is the cluster corresponding to the center $c_i \in C$. The goal is to minimize the following cost, over all pairs $(C, P)$.
	$$\cost(C,P)=\sum_{i=1}^{k}\sum_{x \in X_i}^{}d(c_i,x)$$\vspace{-0.5cm}}
\probMeans is a variant of \probMedian, where the only difference is that we want to minimize the sum of squares of the distances, i.e., $\displaystyle \sum_{i = 1}^k \sum_{x \in X_i} (d(c_i, x))^2$. In \probCenter, the objective is to minimize the maximum distance of a point and its nearest center, i.e., $\displaystyle \max_{i = 1}^k \max_{x \in X_i} d(c_i, x)$. 

The special cases of \probMedian have a long history, and they are known in the literature as Fermat-Weber problem \cite{wiki:median,wiki:weber}. A recent formulation of \probMeans can be traced back to Steinhaus \cite{steinhaus1956division} and MacQueen \cite{macqueen1967some}. Lloyd proposed a heuristic algorithm  \cite{lloyd1982least} for \probMeans that is extremely simple to implement for euclidean spaces, and it remains popular even today. \probCenter was proved to be \classNP-complete by Hsu and Nemhauser \cite{hsu1979easy}. All three problems have been studied from the perspective of approximation algorithms for last several decades. These three problems---as well as several of their generalizations---are known to admit constant factor approximations in polynomial time. More recently, these problems have also been studied from the perspective of Fixed-Parameter Tractable (FPT) algorithms, where one allows the running times of the form $f(k) \cdot n^{\Oh(1)}$ for some computable function $f$. \probMedian and \probMeans are known to admit improved approximation guarantees using FPT algorithms \cite{cohen2019tight}, and these approximation guarantees are tight up to certain complexity-theoretic assumptions.

A result that initiated this study is an \emph{exact} algorithm for \probCenter \footnote{We note that the result of \cite{agarwal2002exact} holds for a slightly different variant, where the centers can be placed anywhere in $\mathbb{R}^d$. This formulation is more natural and standard in euclidean spaces.} by Agarwal and Procopiuc \cite{agarwal2002exact}, who give an $n^{\Oh(k^{1-\frac{1}{d}})}$ time algorithm in $\mathbb{R}^d$. In particular, in two dimensional space, their algorithm runs in $2^{\Oh(\sqrt{n} \log n)}$ time for any value of $k$, i.e., in \emph{sub-exponential} time. This led us towards a natural question, namely, studying the complexity of \probMedian, \probMeans, \probCenter in general metrics. 

Note that it is easy to design an exact algorithm that runs in time $\binom{n}{k} \cdot n^{\Oh(1)}$ -- it simply enumerates all sets of centers of size $k$, and the corresponding partition of $X$ into clusters is obtained by assigning each point to its nearest center. Then, we simply return the solution with the minimum cost. However, note that when $k$ belongs to the range $n/2 \pm o(n)$, $\binom{n}{k} \simeq 2^n$. Thus, the na\"ive algorithm has running time $\Oh^*(2^n)$ in the worst case. 

For many problems, the running time of $\Oh^*(2^n)$ is often achievable by a brute-force enumeration of all the solutions. However, for many \classNP-hard problems, it is often possible to obtain improved running times. The field of exact algorithms for \classNP-hard problems is several decades old. In 2003, Woeginger wrote a survey \cite{woeginger2003exact} on this topic, which revived the field. This eventually led to a plethora of new results and techniques, such as subset convolution \cite{bjorklund2009set}, measure and conquer \cite{fomin2009measure}, and monotone local search \cite{fomin2019monotone}. A detailed survey on this topic can be found in a textbook by Kratsch and Fomin \cite{kratsch2010exact}. We study the aforementioned classical clustering problems from this perspective. In other words, we ask whether the classical clustering problems such as \probMedian and \probMeans admit moderately exponential-time algorithms, i.e., algorithms with running time $c^n \cdot n^{\Oh(1)}$ for a constant $c < 2$ that is as small as possible. We indeed answer this question in the affirmative, leading to the following theorem. 

\begin{theorem}\label{thm:kmed-exp}
		There is an exact algorithm for \probMedian (\probMeans) in time $(1.89)^{n}n^{\Oh(1)}$, where $n$ is the number of points in $X$.
\end{theorem}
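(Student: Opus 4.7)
My plan is to reduce \probMedian (and \probMeans) to a weighted partition problem and then enumerate only the centers of ``large'' clusters in an optimal solution, which suffices because a simple averaging argument caps their count well below $n/2$.

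\textbf{Step 1 (Reformulation).} For every non-empty $T \subseteq X$, let $g(T) := \min_{c \in T} \sum_{x \in T} d(c,x)$, the optimal 1-median cost on $T$, computable in polynomial time by trying each candidate center $c \in T$. The \probMedian objective then equals $\min \sum_{i=1}^k g(T_i)$, where the minimum is taken over all partitions of $X$ into $k$ non-empty parts. The same formulation handles \probMeans after replacing $d$ by $d^2$ inside $g$, and no property of $d$ (in particular not the triangle inequality) is ever used.

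\textbf{Step 2 (Enumerate large-cluster centers).} Call a cluster \emph{large} if its size is at least $3$, and \emph{small} otherwise. In any optimal solution let $A^{*} \subseteq X$ be the set of centers of large clusters. Since large clusters together contain at least $3|A^{*}|$ points of $X$, we have $|A^{*}| \leq \lfloor n/3 \rfloor$. The algorithm enumerates every $A \subseteq X$ with $|A| \leq \lfloor n/3 \rfloor$; the number of such sets is at most
\[
\sum_{a=0}^{\lfloor n/3 \rfloor} \binom{n}{a} \;\leq\; 2^{H(1/3)\,n + o(n)} \;=\; (1.89)^n \cdot n^{\Oh(1)},
\]
where $H(\cdot)$ is the binary entropy and $2^{H(1/3)} < 1.89$.

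\textbf{Step 3 (Polynomial-time subproblem for fixed $A$).} For each enumerated $A$ we compute the minimum-cost clustering in which the set of large-cluster centers is \emph{exactly} $A$. In such a clustering every point of $X \setminus A$ is of one of three types: (i) an \emph{extension} of some $c_i \in A$, with each $c_i$ receiving at least two extensions so that its cluster has size $\geq 3$; (ii) a \emph{singleton} small cluster; or (iii) a member of a \emph{pair} small cluster (together with another point of $X \setminus A$). In addition, the total number of small clusters must equal $k - |A|$. We encode this subproblem as a min-cost generalized ($b$-)matching on an auxiliary graph whose vertex set includes $X \setminus A$, copies of the centers in $A$ (with lower bound two on their degree), and singleton dummies; it has bipartite edges from $X \setminus A$ to $A$ of weight $d(x,c_i)$, non-bipartite edges between $x,y \in X \setminus A$ of weight derived from $g(\{x,y\})$, and singleton self-loops of weight $0$. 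Such a formulation is solvable in polynomial time by Edmonds' weighted general matching, after iterating over the polynomially many admissible choices of the number of pair edges.

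\textbf{Main obstacle.} The technical heart is Step 3: giving a polynomial-time routine that simultaneously handles the bipartite assignment to $A$ with a load lower bound of two per center, the non-bipartite matching inside $X \setminus A$, and the global cardinality constraint on the number of small clusters. Once a correct $b$-matching/flow formulation is in place, combining the enumeration of Step 2 with the polynomial subroutine of Step 3 yields total running time $(1.89)^n \cdot n^{\Oh(1)}$, and taking the minimum over all enumerated $A$ is guaranteed to hit $A^{*}$ and hence return the optimum.
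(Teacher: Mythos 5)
Your Steps 1 and 2 mirror the paper's approach exactly: fix an optimal solution, bound the number of centers of clusters of size at least $3$ by $n/3$, enumerate all $\sum_{a\le n/3}\binom{n}{a}\le(1.89)^n$ candidate sets, and for each candidate solve the residual problem in polynomial time via matching. The discrepancy is entirely in Step~3, which you yourself flag as ``the technical heart'' and do not carry out. The paper closes this gap with a concrete construction that reduces the residual problem to \emph{ordinary} minimum-weight perfect matching: after additionally guessing the split $(k_1,k_2,k_3)$ of $k$ into cluster-size types (only $\Oh(k^2)$ choices), it builds an auxiliary graph with $s=n-k_3-2k_2-k_1$ \emph{copies} of each guessed center, $k_1$ zero-weight \emph{dummy} vertices $\uiso$ to absorb singletons, and $s(k_3-1)$ zero-weight \emph{filler} vertices $\zfill$ to absorb unused center copies. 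A minimum-weight perfect matching of this graph is then shown to have weight exactly $\OPT_{\kmed}(X)$ when the guess is correct, and at least $\OPT_{\kmed}(X)$ for every guess.

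Two specific points in your sketch would not survive contact with the details. First, ``singleton self-loops of weight $0$'' are not part of the matching framework; the paper replaces them by dummy vertices matched with zero-weight edges, and you would need to do the same. Second, you impose a degree lower bound of $2$ on each guessed center so that its cluster has size $\ge 3$, but this constraint is unnecessary and only complicates the formulation: the paper never enforces it, because even if a guessed center ends up with fewer than two matched points, the matching still corresponds to a valid $k$-median clustering whose cost equals the matching weight, so the argument that the minimum over all guesses hits $\OPT_{\kmed}(X)$ goes through unchanged. Similarly, the ``global cardinality constraint on the number of small clusters'' that you note as awkward is handled cleanly by the copy/filler trick (exactly $s$ copies remain to be matched to real points, exactly $2k_2$ points remain to be matched among themselves), with no generalized $b$-matching machinery needed. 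Your $b$-matching route could probably be made to work with these repairs, but as written it is a plan rather than a proof: the auxiliary-graph construction and the two-directional cost-preservation lemma that constitute the paper's actual contribution are missing.
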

To explain the idea behind this result, consider the following fortuitous scenario. Suppose that the optimal solution only contains clusters of size exactly $2$. In this case, it is easy to solve the problem optimally by reducing the problem to finding a minimum-weight matching in the complete graph defining the metric \footnote{Note that the cluster-center always belongs to its own cluster, which implies that a cluster of size $2$ contains one \emph{additional} point. This immediately suggests the connection to minimum-weight matching.}. Note that the problem of finding Minimum-Weight Perfect Matching is known to be polynomial-time solvable by the classical result of Edmonds \cite{edmonds1965}. This idea can also be extended if the optimal solution only contains clusters of size $1$ and $2$, by finding matching in an auxiliary graph. However, the idea does not generalize to clusters of size $3$ and more, since we need to solve a problem that has a flavor similar to the $3$-dimensional matching problem or the ``star partition'' problem, which are known to be \classNP-hard \cite{GareyJ79,ChalopinP14,KirkpatrickH83}. Nevertheless, if the number of points belonging to the clusters of size at least $3$ is \emph{small}, one can ``guess'' these points, and solve the remaining points using  matching. However, the number of points belonging to the clusters of size at least $3$ can be quite large -- it can be as high as $n$. But note that the number of \emph{centers} corresponding to clusters of size at least $3$ can be at most $n/3$. We show that ``guessing'' the subset of centers of such clusters is sufficient (as opposed to guessing \emph{all} the points in such clusters), in the sense that an optimal clustering of the ``residual'' instance can be found---again---by finding a minimum-weight matching in an appropriately constructed auxiliary graph. 

We briefly explain the idea behind the construction of this auxiliary graph. Note that in order to find an optimal clustering in the ``residual'' instance, we need to figure out the following things: (1) the set of points that are involved in clusters of size $1$, i.e., \emph{singleton} clusters, (2) the pairs of points that become clusters of size $2$, and (3) for each center $c_i$ of a cluster of size at least $3$, the set of at least two additional points that are connected to $c_i$. We find the set of points of type (1) by matching them to a set of \emph{dummy} points with zero-weight edges. The pairs of points involved in clusters of size $2$ naturally correspond to a matching, such that the weight of each edge corresponds to the distance between the corresponding pair of points. Finally, to find points of type (3), we make an appropriate number of \emph{copies} of each guessed center $c_i$ that will be matched to the corresponding points. Although the high-level idea behind the construction of the graph is very natural, it is non-trivial to construct the graph such that a minimum-weight perfect matching in the auxiliary graph exactly corresponds to an optimal clustering (assuming we guess the centers correctly). Thus, this construction pushes the boundary of applicability of matching in order to find an optimal clustering. Since the minimum-weight perfect matching problem can be solved in polynomial time, the running time of our algorithm is dominated by guessing the set of centers of clusters of size at least $3$. As mentioned previously, the number of such centers is at most $n/3$, which implies that the number of guesses is at most $\binom{n}{n/3} \le (1.89)^n$, which dominates the running time of our algorithm. We describe this result in Section \ref{sec:matching}. We complement these moderately exponential algorithms by showing that these running times are asymptotically optimal. Formally, assuming the Exponential Time Hypothesis (ETH), as formulated by Impagliazzo and Paturi \cite{Impagliazzo2001}, we show that these problems do not admit an algorithms running in time $2^{o(n)} \cdot n^{\Oh(1)}$. A formal definition of ETH is given in Section \ref{sec:preliminaries}, and we prove the ETH-hardness result in Section \ref{sec:eth}.

We note that our algorithm as well as the hardness result also holds for \probCenter. However, it is folklore that the \emph{exact} versions of \probCenter and \probDominating are equivalent. Thus, using the currently best known algorithm for \probDominating by Iwata \cite{iwata2011faster}, it is possible to obtain an $\Oh^*((1.4689)^n)$ time algorithm for \probCenter. 

We also consider a ``facility location'' or ``supplier'' version, which is a generalization of the clustering problems defined above. In this setting, we are given a set of clients (or points) $X$, and a set of facilities (or centers) $F$. In general the sets $X$ and $F$ may be different, or even disjoint. In these versions, the set of $k$ centers $C$ must be chosen from $F$, i.e., $C \subseteq F$. We formally state the ``supplier'' version of \probMedian, which we call \probMedianFacility \footnote{We note that a slight generalization of this problem has been considered by Jain and Vazirani \cite{jain2001approximation}, who called it ``a common generalization of $k$-median and Facility Location'', and gave a constant approximation in polynomial time.}.

\defproblema{\probMedianFacility}%
{Given a metric space $(X \cup F,d)$, where $X=\{x_1,\ldots,x_n\}$ of $n$ points, called clients, $F$ is a set of $m$ centers, and a positive integer $k$.}%
{Find a pair $(C,P)$, where $C =\{c_1,\ldots,c_k\}\subseteq F$ of size at most $k$ and $P$ is a partition of $X$ into $k$ subsets $\{X_1,\ldots,X_k\}$ (clusters) such that each client in cluster $X_i$ is assigned  to center $c_i$ so as to minimize the $k$-median cost of clustering, defined as follows: 
	$$\cost(C,P)=\sum_{i=1}^{k}\sum_{x \in X_i}^{}d(c_i,x)$$\vspace{-0.5cm}}
It is also possible to define the analogous versions of \probMeans and \probCenter -- the latter has been studied in the approximation literature under the name of $k$-\textsc{supplier}. In this paper, we show that these ``facility location'' versions of \probMedian/\probMeans/\probCenter are computationally harder, as compared to the normal versions, in the following sense. Consider the concrete example of \probMedian and \probMedianFacility. As mentioned earlier, we beat the ``trivial'' bound of $\Oh(2^n)$, by giving a $\Oh((1.89)^n)$ time algorithm for \probMedian. On the other hand, we show that for \probMedianFacility, it is not possible to obtain a $2^{(1-\epsilon)n} \cdot (mn)^{\Oh(1)}$ time algorithm for any fixed $\epsilon > 0$ (note that $m = |F|$ is the number of facilities and $n = |X|$ is the number of clients). For showing this result, we use the \probSeCoCo, which is a complexity theoretic hypothesis proposed by Cygan et al.\ \cite{Cygan2016}. We match this lower bound by designing an algorithm with running time $2^n \cdot (mn)^{\Oh(1)}$ under some mild assumptions. The details are in Section \ref{sec:subset-conv}. While this algorithm is not obvious, it is a relatively straightforward application of the subset convolution technique. This algorithm also works for the supplier versions of \probMeans and \probCenter; however, again there is a much simpler algorithm for $k$-\textsc{supplier} with a similar running time. 

Finally, note that designing an algorithm for the supplier versions with running time $2^{m} \cdot (mn)^{\Oh(1)}$ is trivial by simple enumeration.  It is not known whether the base of the exponent can be improved by showing an algorithm with running time $(2-\epsilon)^m (mn)^{\Oh(1)}$ for some fixed $\epsilon > 0$, or whether this is not possible assuming a similar complexity-theoretic hypothesis, such as \probSeCoCo, or Strong Exponential Time Hypothesis (SETH). We leave this open for a future work.
\section{Preliminaries} \label{sec:preliminaries}
We denote by $G=(V(G),E(G))$ a graph with vertex set $V(G)$ and edge set $E(G)$.
Cardinality of a set $S$ denoted by 
$|S|$ is the number of elements of the set. We denote an (undirected) edge between vertices $u$ and $v$ as $uv$. We denote by $N(v)=\{u \in V(G) \mid (u,v) \in E(G)\}$ be the \emph{open neighbourhood} (or simply neighbourhood) of $v$, and let 
$N[v]=N(v) \cup \{v\}$ be the \emph{closed neighbourhood} of $v$.

A \emph{matching} $M$ of a graph $G$ is a set of edges such that no two edges have common vertices.
A vertex $ v\in V(G)$ is said to be saturated by $M$ if there is an edge in $M$ incident to $v$, otherwise it is said to be unsaturated.
We also say that $M$ saturates $v$. 
We say that a vertex $u$ is matched to a vertex $v$ in $M$ if there is an edge $e \in M$ such that $e=(u,v)$. 
A perfect matching in a graph $G$ is a matching which saturates every vertex in $G$.
Given a weight function $w\colon E(G)\rightarrow \mathbb{R}_{\geq 0}$, the minimum weight perfect matching problem is to find a perfect matching $M$
(if it exists) of minimum weight $w(M)=\sum_{e\in M}w(e)$.
It is well known to be solvable in polynomial time by the Blossom algorithm of Edmonds \cite{edmonds1965}.


A $q$-CNF formula $\phi = C_1 \wedge \ldots \wedge C_m$ is a boolean formula over $n$ variables $\mathcal{X} = \{x_1, x_2, \ldots, x_n\}$, such that each clause $C_i$ is a disjunction of at most $q$ literals of the form $x_i$ or $\neg x_i$, for some $1 \le i \le n$. In a  $q$-SAT instance we are given a $q$-CNF formula $\phi$, and the question is to decide whether $\phi$ is satisfiable. Impagliazzo and Paturi \cite{Impagliazzo2001} formulated the following hypothesis, called Exponential Time Hypothesis. Note that this ETH is a stronger assumption than $\classP \neq \classNP$.

\emph{Exponential Time Hypothesis (ETH)} states that $q$-SAT, $q \geq 3$ cannot be solved within a running time of $2^{o(n)}$ or $2^{o(m)}$, where  $n$ is the number of variables and $m$ is the number of clauses in the input $q$-CNF formula. 

\section{Proof of Theorem 1 } \label{sec:matching}
Before delving into the proof of Theorem \ref{thm:kmed-exp}, we discuss the approach at a high level. We begin by ``guessing'' a subset of centers from an (unknown) optimal solution. For each guess, the problem of finding the best (i.e., minimum-cost) clustering that is ``compatible'' with the guess is reduced to finding a minimum weight perfect matching in an auxiliary graph $G$. The graph $G$ is constructed in such a way that this clustering can be extracted by essentially looking at the minimum-weight perfect matching. Note that \probPM problem is well known to be solvable in polynomial time by the Blossom algorithm of Edmonds \cite{edmonds1965}. Finally, we simply return a minimum-cost clustering found over all guesses. 

Let us fix some optimal $k$-median solution and let $k_1^{\ast}$, $k_2^{\ast}$ and $k_3^{\ast}$ be a partition of $k$, where $k_1^{\ast}:$ the number of clusters of size exactly 1, call \emph{Type1}; 
$k_2^{\ast}:$ the number of clusters of size exactly $2$, call \emph{Type2}; and 
$k_3^{\ast}:$ the number of clusters of size at least $3$, call \emph{Type3}.
Let $C_3^{\ast} \subseteq X$ be $\emph{Type3}$ centers, and say $C_3^{\ast} =\{c_1,\ldots,c_{k^{\ast}_3}\}$. Observe that number of clusters with $\emph{Type3}$ centers is at most $\frac{n}{3}$.
Suppose not, then the number of clusters with $\emph{Type3}$ centers is greater than $\frac{n}{3}$.
Each $\emph{Type3}$ cluster contains at least three points.
This contradicts that the number of input points is $n$.

\medskip\noindent\textbf{Algorithm.} First, we guess the partition of $k$ into $k_1$, $k_2$, $k_3$ as well as a subset $C_3 \subseteq X$ of size at most $n/3$. For each such guess $(k_1, k_2, k_3, C_3)$, we construct the auxiliary graph $G$ (as defined subsequently) corresponding to this guess, and compute a minimum weight perfect matching $M$ in $G$. Let $M^*$ be a minimum weight perfect matching over \emph{all} the guesses. We extract the corresponding clustering $(C^*, P^*)$ from $M^*$ (also explained subsequently), and return as an optimal solution of the given instance. 

\medskip\noindent\textbf{Running time.} Note that there are at most $\Oh(k^2)$ tuples $(k_1, k_2, k_3)$ such that $k_1 + k_2 + k_3 \le k$ (note that $k_i$'s are non-negative integers). Furthermore, there are at most $\sum_{i = 0}^{n/3} \binom{n}{i} \le (1.89)^n$ subsets of $X$ of size at most $n/3$. Finally, constructing the auxiliary graph, and finding a minimum-weight perfect matching takes polynomial time. Thus, the running time is dominated by the number of guesses for $C_3$, which implies that we can bound the running time of our algorithm by $\Oh^*((1.89)^n)$.

\medskip\noindent\textbf{Construction of Auxiliary Graph.}
From now on assume that our algorithm made the right guesses, i.e., suppose that $(k_1,k_2,k_3)=(k_1^{\ast}, k_2^{\ast}, k_3^{\ast})$ and $C_3^{\ast}=C'$.
Then, we initialize the $\emph{Type3}$ centers by placing each center from $C'$ into a separate cluster.
At this point, to achieve this, we reduce the problem to the classical \probPM on an auxiliary graph $G$, which we define as follows.
(See Figure~\ref{fig:graph} for an illustration of the construction).
\begin{itemize}
	\item For each $i \in \{1,\ldots,k_3\}$, construct a set of $s=n-k_3-2k_2-k_1$ vertices $C_i=\{c^i_1,\ldots,c^i_s\}$.
	Denote $W=\cup_{i=1}^{k}C_i$; the block of vertices $C_i$ corresponds to center $c_i$.
	\item Let $Y=X\setminus C'$, that is, a set consisting of unclustered points in $X$. 
	Observe $|Y|=n-k_3$.
	Denote $Y=\{y_1,\ldots,y_{(n-k_3)}\}$.
	For simplicity, we slightly abuse the notation by keeping the vertices in $G$ same as points in $Y$. That is,
	for each $i \in \{1,\ldots,(n-k_3)\}$, place a vertex $y_i$ in the set $Y$.
	Make each $y_i$ adjacent to all vertices of $W$.
	\item For each $i \in \{1,\ldots,k_1\}$, construct an auxiliary vertex $u_i$. 
	Denote $\uiso=\{u_1,\ldots,u_{k_1}\}$.
	Make each $u_i$ adjacent to every vertex of $Y$.
	\item Construct a set of $s(k_3-1)$ vertices, $\zfill=\{z_1,\ldots,z_{s(k_3-1)}\}$, that we call fillers and make vertices of $\zfill$ adjacent to the vertices of $W$.
\end{itemize}
We define edge weights. For an edge $(u, v) \in E(G)$, we will use $w(u, v)$ to denote $w((u, v))$ to avoid clutter.
\begin{itemize}
	\item For every $i \in \{1,\ldots,(n-k_3)\}$ and every $j \in \{1, \ldots ,k_3 \}$ set $w(y_i, c^j_h)=d(y_i,c_j)$ for $h \in \{1,\ldots,s\}$, i.e, weight of all edges joining $y_i$ in $Y$ with the vertices of $C_i$ corresponding to center $c_j$.
	\item For every $i,j \in \{1,\ldots,n-k_3\}$, $i \neq j$, set $w(y_i, y_j)=d(y_i,y_j)$, i.e, the weight of edges between vertices of $Y$. 
	\item For every $i \in \{1,\ldots,k_1\}$ and $j \in \{1,\ldots,(n-k_3)\}$, set $w(u_i, y_j)=0$, i.e., the edges incident to the vertices of $\uiso$ have zero weights.
	\item For every $i \in \{1,\ldots,s(k_3-1)\}$ and $j \in \{1, \ldots,k_3\}$,
	$w(z_ic^j_h)=0$, for $h \in \{1,\ldots,s\}$, i.e., the edges incident to the fillers have zero weights.
\end{itemize}

\begin{lemma}
	The graph $G$ has a perfect matching.
\end{lemma}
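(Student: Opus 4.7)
The plan is to exhibit an explicit perfect matching of $G$ by translating the fixed optimal $k$-median solution, which by assumption is compatible with the guess $(k_1,k_2,k_3,C')$, directly into edges of $G$. First I would fix notation: let $\mathcal{X}_1 \subseteq Y$ be the set of points forming Type1 (singleton) clusters, $\mathcal{X}_2 \subseteq Y$ the points forming Type2 clusters, and for each Type3 center $c_j \in C'$ let $A_j \subseteq Y$ collect the $t_j := |A_j| \geq 2$ additional points assigned to $c_j$. These sets partition $Y$, and counting gives $|\mathcal{X}_1| = k_1$, $|\mathcal{X}_2| = 2k_2$, and $\sum_{j=1}^{k_3} t_j = n - k_3 - 2k_2 - k_1 = s$.

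Then I would build the matching $M$ in three rounds. First, pair up $\mathcal{X}_1$ with $\uiso$ via an arbitrary bijection, using the (zero-weight) edges that join every vertex of $\uiso$ to every vertex of $Y$. Second, for every Type2 cluster $\{x,y\} \subseteq \mathcal{X}_2$, include the edge $(x,y)$, which is present because $Y$ is a clique in $G$. Third, for each Type3 center $c_j$, arbitrarily label $A_j$ as $p_1^j,\dots,p_{t_j}^j$ and include the edges $(p_h^j, c_h^j)$ for $h = 1,\dots,t_j$; these edges exist since $Y$ is completely joined to $W$. After these rounds, every vertex of $Y$ and every vertex of $\uiso$ is saturated, and exactly $t_j$ of the $s$ vertices of each block $C_j$ are saturated.

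It then remains to match the unsaturated vertices of $W$ to $\zfill$. The number of unsaturated vertices in $W$ equals $\sum_j (s - t_j) = k_3 s - s = s(k_3 - 1)$, which equals $|\zfill|$; since every filler is adjacent to every vertex of $W$, any bijection between these two sets extends $M$ to a perfect matching of $G$. The only nontrivial check is that $t_j \leq s$ for every $j$ in round three, which is immediate from $t_j \leq \sum_{j'} t_{j'} = s$.

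The main (mild) obstacle I anticipate is bookkeeping: making sure the three rounds together saturate every vertex of $G$ and do not reuse any vertex. I would also explicitly address the degenerate cases $s = 0$ or $k_3 = 0$, in which $W$ and $\zfill$ are both empty and only the first two rounds are executed; here $|\mathcal{X}_1| = |\uiso| = k_1$ and $|\mathcal{X}_2| = 2k_2$ by the same counting, so rounds one and two already produce a perfect matching of $Y \cup \uiso$, which is all of $V(G)$.
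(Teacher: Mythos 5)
Your proof is correct, and it takes a genuinely different route from the paper's. The paper's proof is a pure counting argument: it never invokes the fixed optimal solution and instead pairs vertices arbitrarily by cardinality in four stages ($\uiso$ to arbitrary $Y$-vertices, $\zfill$ to arbitrary $W$-vertices, the leftover $s$ vertices of $W$ to leftover $Y$-vertices, and the remaining $2k_2$ vertices of $Y$ among themselves). Your proof is instead ``semantic'': it reads off the matching edges directly from the cluster structure of the fixed optimal solution ($\mathcal{X}_1 \leftrightarrow \uiso$, Type2 pairs as $Y$--$Y$ edges, the $A_j$'s into the blocks $C_j$, then fillers). The counting argument in the paper buys a little more generality --- it shows $G$ has a perfect matching for any consistent guess $(k_1,k_2,k_3,C_3)$ with $s \ge 0$, not just the correct one, so the algorithm is well-defined on every branch --- whereas your construction is essentially a preview of the reverse direction of the next lemma, where one shows a clustering of cost $\OPT_{\kmed}(X)$ yields a perfect matching of the same weight. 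If you went this route in the paper, you could merge the existence lemma into that reverse direction and avoid duplication. Your handling of the degenerate case $k_3 = 0$ (forcing $s=0$, $W = \zfill = \emptyset$) is a nice touch; the paper's proof handles this silently since all the sets it pairs up are then empty.
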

\begin{proof}
	We construct a set $M \subseteq E(
	G)$ that saturates every vertex in $G$.
	
	Note that $|\uiso|<|Y|$ and every vertex of $\uiso$ is adjacent to every vertex of $Y$. 
	Therefore, we can  construct $M_1 \subseteq E(G)$ by arbitrarily mapping each vertex of $\uiso$ to a distinct vertex of $Y$.
	Clearly, $M_1$ is matching saturating vertices of $\uiso$ .
	Since $|\uiso|=k_1$, $M_1$ saturates $k_1$ vertices of $Y$.
	Denote by $Y'$ the set of vertices of $Y$ that are not saturated by $M_1$.
	Observe $|Y'|=s+2k_2$.
	
	Every vertex of $\zfill$ is adjacent to every vertex of $W$ and $|\zfill|<|W|$.
	Construct $M_{2}\subseteq E(G)$ by arbitrarily mapping each vertex of $\zfill$ to a distinct vertex of $W$.
	Thus, $M_{2}$ is a matching which saturates every vertex of $\zfill$ and since $|\zfill|=s(k_3-1)$, it also saturates $s(k_3-1)$ vertices of $W$.
	Denote by $W'$ the set of vertices of $W$ that is not saturated by $M_2$.
	Observe $|W'|=s$.
	Recall, every vertex of $W'$ is adjacent to every vertex of $Y'$ and note that $|W'|<|Y'|$.
	Therefore, construct $M_3 \subseteq E(G)$ by arbitrarily matching each vertex of $W'$ with a distinct vertex of $Y'$. 
	
	Thus, the matching $M_3$ saturates $s$ vertices in both the sets $W'$ and $Y'$.
	Denote $M'=M_1 \cup M_{2} \cup M_3$.
	Clearly, the vertices of $\uiso$, $W$ and $\zfill$ are saturated by $M'$.
	
	Denote by $Y''=Y \setminus Y'$ the set of vertices of $Y$ that are not saturated by $M'$.
	Note that $|Y''|=2k_2$. 
	Consider $M_4 \subseteq E(G)$ which maps these $2k_2$ vertices to each other.
	We set $M=M' \cup M_4$. 
	It is easy to see that $M$ is a perfect matching.
\end{proof}
\begin{figure}
	\centering
	\includegraphics[scale=0.87]{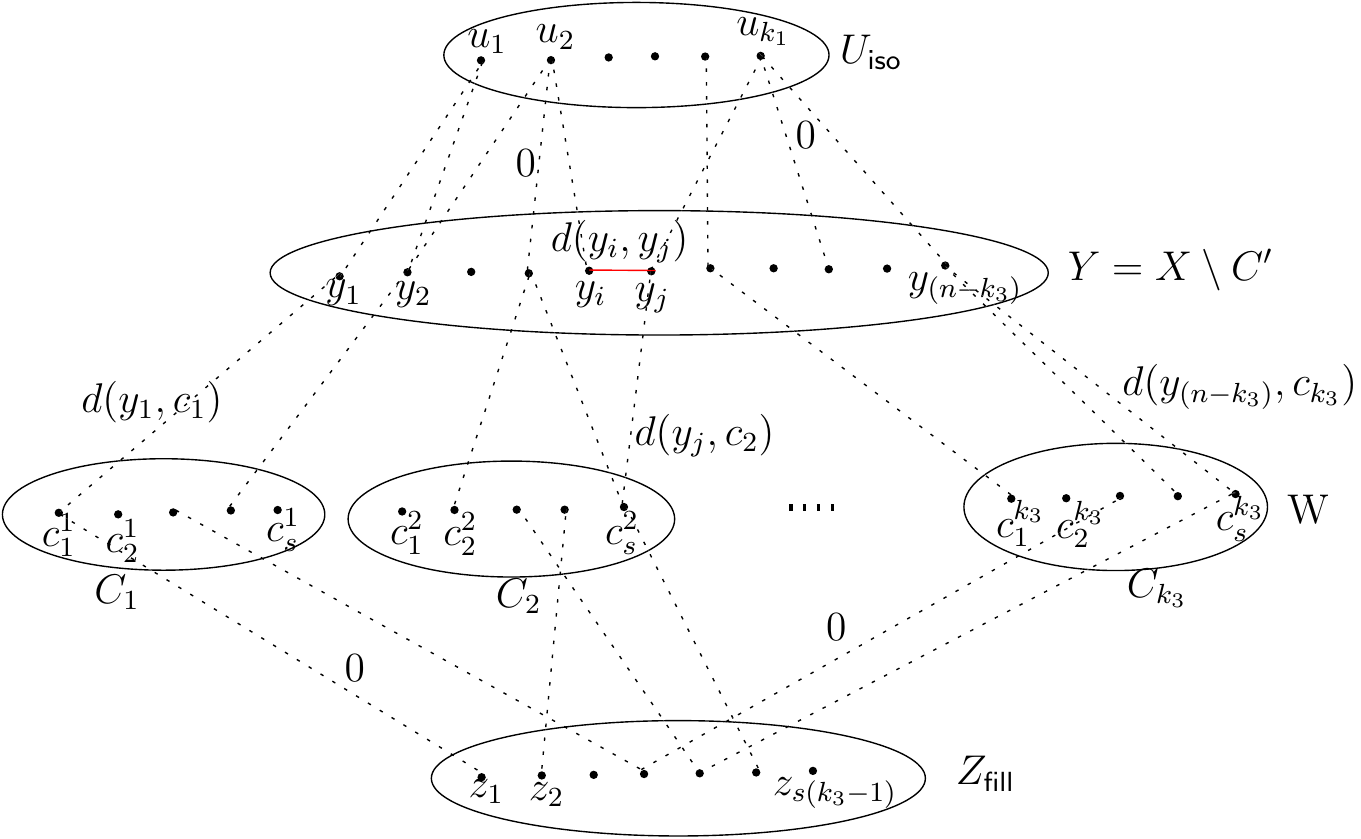}
	\caption{
		Illustration of the graph $G$ produced in the reduction from \probMedian to \probPM. 
		To avoid clutter, we only show some representative edges. Recall that we guess the set of $k_3$ centers of type 3, and corresponding to each such center $c_i$, we add a set $C_i$ consisting of $s$ copies corresponding to that center. Next, we have the set $Y$ corresponding to $n-k_3$ unclustered points. Finally, $\uiso$ and $\zfill$ consist of auxiliary vertices in order to ensure a perfect matching. The weights of vertices among $Y$ correspond to the corresponding original distance; whereas the weight of an edge between $y_\ell \in Y$, and a copy $c_i^j$ corresponding to a type $3$ center $c_i$ is defined to be $d(y_\ell, c_i)$. The weights of all other edges are equal to zero.}
	\label{fig:graph}
\end{figure}
We next show one-to-one correspondence between perfect matchings of
G and $k$-median clusterings of $X$.

\begin{lemma}
	Let $\OPT_{\mm}(G)=$weight of minimum weight perfect matching, and $\OPT_{\kmed}(X)=$ optimal clustering cost of $k$-median clustering of $X$. 
	Then, $\OPT_{\mm}(G)=\OPT_{\kmed}(X)$.
\end{lemma}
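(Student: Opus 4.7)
I would prove the two inequalities separately, under the convention that we are considering the ``correct'' guess, i.e., $(k_1,k_2,k_3) = (k_1^{\ast},k_2^{\ast},k_3^{\ast})$ and $C_3 = C_3^{\ast}$. The direction $\OPT_{\mm}(G) \le \OPT_{\kmed}(X)$ is obtained by turning the fixed optimal clustering into a perfect matching of exactly the same weight. The reverse direction $\OPT_{\mm}(G) \ge \OPT_{\kmed}(X)$ is obtained by extracting a valid $k$-median clustering from an arbitrary minimum-weight perfect matching. Note that for a ``wrong'' guess the same extraction still produces a valid clustering, so taking the minimum over all guesses yields $\OPT_{\kmed}(X)$, which is what the algorithm needs.

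\textbf{Direction $\OPT_{\mm}(G) \le \OPT_{\kmed}(X)$.} I would encode the optimal clustering edge-by-edge. For every Type 1 center $c$, pick a distinct $u_i \in \uiso$ and add the edge $u_i c$ of weight $0$ (recall $c \in Y$ since $c \notin C_3$). For every Type 2 cluster $\{c,y\}$, add the edge $cy$ of weight $d(c,y)$ (again, both endpoints lie in $Y$). For every Type 3 cluster with center $c_i \in C_3$ and non-center points $y_1,\ldots,y_{t_i}$ (with $t_i\ge 2$), match $y_1,\ldots,y_{t_i}$ to $t_i$ distinct copies in $C_i$, each edge of weight $d(y_j,c_i)$. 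The identity $\sum_i t_i = n - k_1 - 2k_2 - k_3 = s$ shows that exactly $s$ copies of $W$ are used up this way, leaving $sk_3-s=s(k_3-1)=|\zfill|$ unused copies, which can be matched to $\zfill$ bijectively at zero cost. The total edge weight is, by construction, precisely $\OPT_{\kmed}(X)$.

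\textbf{Direction $\OPT_{\mm}(G) \ge \OPT_{\kmed}(X)$.} Given a minimum-weight perfect matching $M$, I would read off a clustering as follows. Since $\uiso$ is adjacent only to $Y$, every $u_i$ is matched to a distinct $y \in Y$; these $k_1$ vertices of $Y$ become singleton (Type 1) clusters of cost $0$. Similarly $\zfill$ is adjacent only to $W$, so $s(k_3-1)$ copies in $W$ are matched to $\zfill$ at zero weight, leaving exactly $s$ copies in $W$ to be matched to vertices of $Y$; a short cardinality check then forces the remaining $|Y|-k_1-s=2k_2$ vertices of $Y$ to be paired among themselves, giving $k_2$ Type 2 clusters of cost $d(y,y')$ (equal to the edge weight). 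For each $c_i \in C_3$, declare the Type 3 cluster of $c_i$ to consist of $c_i$ together with all $y \in Y$ matched to some copy in $C_i$; because $w(y,c_i^h)=d(y,c_i)$, the edges incident to $C_i$ contribute exactly $\sum d(y,c_i)$ to $w(M)$, which is the cost of this cluster. Adding up, the extracted clustering has $k_1+k_2+k_3=k$ centers, partitions $X$, and has total cost $w(M)$, hence $\OPT_{\kmed}(X)\le w(M)=\OPT_{\mm}(G)$.

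\textbf{Main obstacle.} The delicate point is the cardinality argument of the $(\ge)$ direction: once the $\uiso$--$Y$ and $\zfill$--$W$ edges are forced by adjacency, one must verify that the remaining perfect matching couples exactly $s$ vertices of $Y$ with $s$ copies in $W$, so that the leftover $2k_2$ vertices of $Y$ pair up among themselves. A secondary subtlety is that the extracted Type 3 clusters may have fewer than $3$ points if some $c_i$ is matched to $<2$ copies carrying $Y$-edges; this is harmless because the extracted partition is still a feasible $k$-median solution with cost $w(M)$, which is all we need for the inequality.
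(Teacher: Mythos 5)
Your proposal is correct and follows essentially the same strategy as the paper: both directions are handled by explicitly translating between perfect matchings and clusterings, with $\uiso$ encoding singletons, $Y$--$Y$ edges encoding size-two clusters, $Y$--$C_i$ edges encoding membership in a Type~3 cluster centered at $c_i$, and $\zfill$ absorbing unused copies at zero cost. The one place you go slightly beyond the paper is your explicit remark that the clustering extracted from a minimum-weight matching need not have all Type~3 clusters of size $\ge 3$, and that this is harmless for the inequality $\OPT_{\kmed}(X)\le w(M)$; the paper asserts the extracted clusters ``correspond to Type 3 clusters'' without flagging this point. That is a genuine (if small) gap you close, but it does not change the overall argument.
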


\begin{proof}
	\noindent In the forward direction, 
	let $M$ denote a minimum weight perfect matching $M \subseteq E(G)$.
	We construct a $k$-median clustering of $X$ of same cost.
	
	Observe that each vertex of $\zfill$ is only adjacent to the vertices of $W$ and $|\zfill|<|W|$.
	Let $W_1 \subseteq W$ be a set of vertices matched to vertices of $\zfill$.
	Since $G$ has a perfect matching, it saturates $\zfill$, where $
	|\zfill|=s(k_3-1)$.
	Then, $|W_1|=s(k_3-1)$.
	Let $W_2=W\setminus W_1$ be set of vertices matched to vertices of $Y$.
	Clearly,  $|W_2|=s$.

	For every $i \in \{1,\ldots,(n-k_3)\}$, vertex $y_i \in Y$ is saturated by $M$.
	Therefore, we construct the $k$-median clustering $\{X_1,\ldots,X_k\}$ of $X$, where each $X_i \in \{\emph{Type1},\emph{Type2},\emph{Type3}\}$, for $i \in \{1,\ldots,k\}$ as follows.
	
	Let  $Y'\subseteq Y$ be the set of vertices that are matched to vertices of $\uiso$ in $M$, where $|\uiso|=k_1<|Y|$.
	Corresponding to each such vertex in $Y'$, select a center in the solution $C$,  call $C_{\emph{Type1}}=\{c^1_{\emph{Type1}},\ldots,c_{\emph{Type1}}^{k_1}\}$.
	Correspondingly, also construct a singleton cluster $X_i=\{c_{\emph{Type1}}^i\}$, for $i \in \{1,\ldots,k_1\}$.
	Let  $X_{\emph{Type1}}$ denote set of all $\emph{Type1}$ clusters.

	We now construct \emph{Type3} clusters: Let $Y_i'\subseteq Y$ be the set of  vertices matched to set $C_i$, for $i \in \{1,\ldots,k_3\}$ in $M$.
	Consider $X_i=Y_i' \cup \{c_i\}$. 
	Clearly, $X_i$, for $i \in \{1,\ldots,k_3\}$ corresponds to $\emph{Type3}$ clusters in  $X$. 
	Let  $X_{\emph{Type3}}$ denote set of all $\emph{Type3}$ clusters.
	Recall, we already guess set $C'=\{c_1,\ldots,c_{k_3}\}$, that is, $\emph{Type3}$ centers correctly.
	
	Lastly, we construct clusters of $\emph{Type2}$.
	Denote by $Y''$ set of unclustered points in $Y$.
	Observe these points form a set of $k_2$ disjoint edges in $M$.
	Arbitrarily, select one of the endpoint of each edge as a center in the solution $C$, call $C_{\emph{Type2}}=\{c^1_{\emph{Type2}},\ldots,c_{\emph{Type2}}^{k_2}\}$.  
	That is, for an edge $(y_1, y_2) \in M$, where $y_1,y_2 \in Y''$, select center as $y_1$ or $y_2$.
	Then construct a cluster $X_i$, for $i\in \{1,\ldots,k_2\}$ by placing both the endpoints of the edge in the same cluster.
	Denote by $X_{\emph{Type2}}$ the set of all $\emph{Type2}$ clusters.
	
	Clearly,  $X_i \in \{Type1, Type2,Type3\}$, for $i \in \{1,\ldots,k\}$ is a partition of $X$.
	Note, since $\emph{Type1}$ clusters are isolated points, therefore, they contribute zero to the total cost of clustering.
	Now we upper bound the cost of the obtained $k$-median clustering:
	\begin{align*}
		\sum_{i=1}^k\sum_{x\in X_i}d(c_i,x)=
		\sum_{i=1}^{k_2}\sum_{ y \in X_{\emph{Type2}}}^{} d(c^i_{\emph{Type2}},y)+\sum_{i=1}^{k_3}\sum_{ y \in X_{\emph{Type3}}}^{} d(c_i,y) =\OPT_{\mm}(G).
	\end{align*}
	
 For the reverse direction, consider a $k$-median clustering  $\{X_1,\ldots,X_k\}$ 
	of $X$ 	into \linebreak
	 $\{\emph{Type1},\emph{Type2},\emph{Type3}\}$ clusters of $X$ such that
	$|\emph{Type1}|=k_1$, $|\emph{Type2}|=k_2$ and $|\emph{Type3}|=k_3$ and $C'=\{c_1,\ldots,c_{k_3}\}$, that is, centers of $\emph{Type3}$ clusters with $\OPT_{\kmed}(X)$.
	We construct a perfect matching $M \subseteq E(G)$ of $G$ as follows.
	
	Observe that each $\emph{Type1}$ cluster is a singleton cluster.
	Construct $M_1 \subseteq E(G)$ by 
	iterating over each singleton vertex in $Y$ correspond to each cluster and matched it to a distinct vertex in $\uiso$.
	Since $|\emph{Type1}|=|\uiso|=k_1$, $M_1$ is a matching saturating set $\uiso$.
	Also, $M_1$ saturates $k_1$ vertices in $Y$.
	
	Corresponding to each $\emph{Type2}$ cluster, construct $M_2 \subseteq E(G)$ by adding an edge between both the end vertices in $Y$.
	Clearly, $M_2$ is a disjoint set of $k_2$ edges in $G$ and saturates $2k_2$ vertices in $Y$.
	
	Denote $Y'\subseteq Y$ be the set of vertices matched by $M_1 \cup M_2$.
	Clearly, $|Y'|=k_1+2k_2$.
	Let $Y''=Y\setminus Y'$ be the set of remaining unmatched vertices in $Y$.
	Then, $|Y''|=|Y|-|Y'|=n-k_3-2k_2-k_1=s$.
	
	Note, we already guessed $C'=\{c_1,\ldots,c_{k_3}\}$ and we have a cluster $X_i$ corresponding to each $C_i$, for $i \in \{1,\ldots,k_3\}$.
	Construct $M_3 \subseteq E(G)$ by matching each vertex of $X_i \setminus \{c_i\}$ in $Y''$ to a distinct copy of $c_i$ in $W$.
	Since $|Y''|<|W|$, $M_3$ saturates $Y''$.
	Let $W_1 \subseteq W$ be the set of vertices saturated by $M_3$.
	Note that $|Y''|=s$, then $|W_1|=s$.
	Let $W_2= W\setminus W_1$ be the set of vertices not saturated by $M_3$, where $|W|=sk_3$.
	Then, $|W_2|=s(k_3-1)$.
	Every vertex of $\zfill$ is only adjacent to every vertex of $W$ (in particular of $W_2$).
	We construct $M_4 \subseteq E(G)$ by matching each vertex of $\zfill$ to a distinct vertex of $W_2$.
	Since $|\zfill|=|W_2|=s(k_3-1)$, $M_4$ saturates $\zfill$ and $W_2$.
	
	To evaluate the weight of $M$, recall that the edges of $G$ incident to set $\uiso$ and filler vertices $\zfill$ have zero weights, that is, $w(M_1)=w(M_4)=0$.	Then 
	\begin{align*}
		w(M)=w(M_2)+w(M_3)&=\sum_{e \in M_2}^{}w(e)+ \sum_{e \in M_3 }^{}w(e)
		\\&=\sum_{c_i:X_i \in X_{\emph{Type2}}}^{}\sum_{y \in C_i}^{}d(y,c_i)+ \sum_{c_i:X_i \in X_{\emph{Type3}}}^{}\sum_{y \in C_i}^{}d(y,c_i)
		\\&=\OPT_{\kmed}(X).
	\end{align*}
	
	It is straightforward to see that the construction of the graph G from an instance $(X,d)$
	of \probMedian can be done in polynomial time.
	Then, because a perfect matching of minimum weight of the graph $G$ can be found in polynomial time \cite{edmonds1965} and the total number of guesses is at most $(1.89)^nn^{\Oh(1)}$, 
	\probMedian can be solved exactly in $(1.89)^{n}n^{\Oh(1)}$ time. 
	This completes the proof of the theorem.
\end{proof}
\begin{remark}
	Note that even if the distances satisfy the triangle inequality, the sum of \emph{squares} of distances do not. Nevertheless, our algorithm also works for \probMeans, where we want to minimize the sum of squares of distances; or even more generally, if we want to minimize the sum of $z$-th powers of distances, for some fixed $z \ge 1$. In fact, our algorithm works for non-metric distance functions -- it is easy to modify construction of graph $G$ so that it works with asymmetric distance functions, which are quite popular in the context of asymmetric traveling salesman problem \cite{svensson2020constant,traub2020improved}. Finally, we note that it may be possible to improve the running time (i.e., the base of the exponent) using the metric properties of distances, and we leave this open for a future work. However, in the next section, we show the running time of an exact algorithm cannot be substantially improved, i.e., to $\Oh^*(2^{o(n)})$.
\end{remark}

\section{ETH Hardness} \label{sec:eth}
In this section, we establish result around the (im)possibility
of solving \probMedian problem in subexponential time in the number of points.
For this, we use the result of Lokshtanov et al. \cite{Lokshtanov2011} which states that, assuming ETH, \probDominating problem cannot be solved in time $2^{o(n)}$ time, where $n$ is the number of vertices of graph.

Given an unweighted, undirected graph $G = (V,E)$, a dominating set $S$ is a subset of $V$ such that each $v \in V$ is dominated by $S$, that is, we either have $v \in S$ or there exists an edge $(uv) \in E(G)$ such that $u \in S$.
The decision version of \probDominating is defined as follows.

\defproblema{\probDominating}%
{Given an unweighted, undirected graph $G(V,E)$, positive integer $k$.}%
{Determine whether $G$ has a dominating set of size at most $k$.}

Lokshtanov et. al \cite{Lokshtanov2011} proved the following result.

\begin{proposition}[{\cite{Lokshtanov2011}}]{\label{proposition:EthlowerboundforDom}}
	Assuming ETH, there is no $2^{o(n)}$ time algorithm for \probDominating problem, where $n$ is the number of vertices of $G$ .
\end{proposition}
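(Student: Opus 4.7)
The plan is to prove the proposition by composing the Sparsification Lemma of Impagliazzo, Paturi, and Zane with a classical linear-vertex reduction from $3$-SAT to \probDominating. ETH in the form stated in Section~\ref{sec:preliminaries} rules out a $2^{o(n+m)}$ algorithm for $3$-SAT; the Sparsification Lemma, which I would invoke as a black box, further lets us assume $m = \Oh(n)$, so that ETH in fact rules out a $2^{o(n)}$ algorithm for sparse $3$-SAT. This is the crucial preparatory step: without sparsification, a reduction that produces $\Oh(n+m)$ vertices would only rule out $2^{o(\sqrt{n+m})}$ for \probDominating, which is too weak.

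The reduction itself takes a $3$-CNF formula $\phi$ on variables $x_1,\dots,x_n$ and clauses $C_1,\dots,C_m$ and builds a graph $G$ as follows. For every variable $x_i$, I would create a triangle on vertices $\{T_i, F_i, y_i\}$; for every clause $C_j$, I would create a single vertex $c_j$ and attach it to $T_i$ for each positive literal $x_i \in C_j$ and to $F_i$ for each negative literal $\neg x_i \in C_j$. The graph has $N = 3n + m$ vertices, which is $\Oh(n)$ once sparsification is applied. I claim $\phi$ is satisfiable iff $G$ has a dominating set of size $n$.

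For the forward direction, a satisfying assignment $\sigma$ gives the set $D = \{T_i : \sigma(x_i)=\mathrm{true}\} \cup \{F_i : \sigma(x_i)=\mathrm{false}\}$; each triangle is dominated by its chosen literal vertex, and each clause vertex $c_j$ is dominated because $\sigma$ satisfies at least one of its literals. For the reverse direction, suppose $D$ is a dominating set of size $n$. Since the $n$ triangles are vertex-disjoint and $y_i$ is adjacent only to vertices of its own triangle, dominating each $y_i$ forces $|D \cap \{T_i, F_i, y_i\}| \geq 1$, and by the size bound equality holds for every $i$. Reading off $\sigma(x_i) = \mathrm{true}$ if $T_i \in D$, $\mathrm{false}$ if $F_i \in D$, and arbitrary if $y_i \in D$, every clause vertex $c_j$ is dominated by some literal vertex in $D$ (it cannot be dominated by a $y_i$ or another $c_{j'}$, since no such edges exist), and that literal witnesses satisfaction of $C_j$ under $\sigma$.

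Combining the pieces: a hypothetical $2^{o(N)} \cdot N^{\Oh(1)}$ algorithm for \probDominating would, via this reduction on sparsified instances, yield a $2^{o(n)}$ algorithm for $3$-SAT, contradicting ETH. The step I expect to be the main obstacle is arguing the reverse direction cleanly, since one must rule out the possibility that $D$ concentrates mass inside a single triangle while leaving some other $y_{i'}$ undominated; the disjointness of triangles together with $N(y_i) = \{T_i, F_i\}$ is exactly what makes the counting argument tight. The use of the Sparsification Lemma is routine but essential, and I would explicitly cite it rather than reprove it.
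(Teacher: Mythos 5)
Your proof is correct, but it is worth noting that the paper does not prove this proposition at all: it is imported verbatim as a known result from Lokshtanov, Marx, and Saurabh \cite{Lokshtanov2011}, so there is no ``paper proof'' to compare against. What you have written is a self-contained derivation along the standard lines: Sparsification Lemma to get $m = \Oh(n)$, then the classical triangle-gadget reduction from $3$-SAT to \probDominating with $N = 3n + m = \Oh(n)$ vertices. Both directions of your equivalence check out; in particular the reverse direction is handled correctly, since the $n$ vertex-disjoint triangles each force one vertex of $D$ (to dominate $y_i$, whose closed neighbourhood lies entirely inside its triangle), the budget $|D| = n$ then excludes clause vertices from $D$, and each $c_j$ must therefore be dominated by a true-literal vertex. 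One small inaccuracy in a side remark: without sparsification a $3$-CNF formula can have $m = \Theta(n^3)$ clauses, so an $\Oh(n+m)$-vertex reduction would only rule out roughly $2^{o(N^{1/3})}$, not $2^{o(\sqrt{N})}$; this does not affect the argument, since your point that sparsification is essential stands either way. Also note that the paper's own statement of ETH in Section~\ref{sec:preliminaries} already includes the $2^{o(m)}$ form (i.e., sparsification is baked in), so strictly speaking you could have invoked that form directly and bounded $N = \Oh(m)$ after discarding variables that appear in no clause.
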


We use this known fact  about \probDominating to prove the following.

\begin{theorem} \label{thm:k-median-eth}
\probMedian cannot be solved in time $2^{o(n)}$ time  unless the exponential-time hypothesis fails, where $n$ is the number of points in $X$.
\end{theorem}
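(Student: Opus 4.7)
The plan is a direct reduction from \probDominating that preserves the parameter $n$ up to a constant factor, so that a hypothetical $2^{o(n)}$ algorithm for \probMedian would yield a $2^{o(n)}$ algorithm for \probDominating, contradicting \cref{proposition:EthlowerboundforDom}.

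Given an instance $(G,k)$ of \probDominating with $V(G)=\{v_1,\ldots,v_n\}$, I would take $X:=V(G)$ (so the number of points is exactly $n$) and define the distance function
\[
d(u,v) \;=\; \begin{cases} 0 & \text{if } u=v,\\ 1 & \text{if } uv\in E(G),\\ 2 & \text{otherwise.}\end{cases}
\]
The first step is a quick verification that $d$ is a metric: it is symmetric and non-negative, and since all positive distances lie in $\{1,2\}$ and the only way to have $d(u,v)=0$ is $u=v$, the triangle inequality $d(u,v)+d(v,w)\ge d(u,w)$ holds by a tiny case check on the value of $d(u,v)$. I would then feed the instance $(X,d,k)$ into any \probMedian solver with threshold $n-k$ on the objective value.

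The key step is the equivalence: \emph{$G$ has a dominating set of size at most $k$ if and only if the optimal $k$-median cost on $(X,d)$ is at most $n-k$.} For any set $C\subseteq V(G)$ of size $k$, each point $v\in V(G)\setminus C$ contributes $\min_{c\in C}d(v,c)$, which is $1$ if $v$ has a neighbour in $C$ and $2$ otherwise, while points in $C$ contribute $0$. Hence the $k$-median cost equals $|V(G)\setminus C|+|\{v\notin C:v\text{ not dominated by }C\}|$, which is $\ge n-k$, with equality exactly when $C$ dominates every vertex outside $C$, i.e.\ when $C$ is a dominating set. (If $G$ has a dominating set of size $<k$, pad it to size $k$ by arbitrary vertices without increasing the cost.)

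Putting the pieces together: if \probMedian were solvable in $2^{o(n)}\cdot n^{\Oh(1)}$ time, then the reduction above (which is polynomial and keeps the point set of size exactly $|V(G)|$) would decide \probDominating in $2^{o(n)}$ time, contradicting \cref{proposition:EthlowerboundforDom} and hence ETH. I do not anticipate a serious obstacle here; the only mildly delicate point is ensuring the reduction preserves $n$ (rather than blowing it up to, say, $|V|+|E|$), which is why we keep the point set identified with $V(G)$ and encode adjacency purely in the distance values. The same argument transfers verbatim to \probMeans (replace the threshold $n-k$ by $n-k$, since the squared distances are still $0$, $1$, $4$ and the case analysis is identical with threshold $n-k$ compared against $4|\{v\notin C:\text{not dominated}\}|+|\{v\notin C:\text{dominated}\}|$), and to \probCenter as well.
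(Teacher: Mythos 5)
Your proposal is correct and takes essentially the same approach as the paper: a reduction from \probDominating\ that keeps the point set equal to $V(G)$ and encodes adjacency in distances so that the optimal $k$-median cost is $n-k$ exactly when a dominating set of size $k$ exists. The only cosmetic differences are that the paper uses the (unweighted) shortest-path metric while you use the metric capped at $2$ (equivalent here, since only the dichotomy $d=1$ vs.\ $d\ge 2$ matters), and you sidestep the paper's ``iterate over $k'=1,\ldots,k$'' preprocessing by padding a smaller dominating set to size $k$; your closed-form identity $\cost(C)=|V\setminus C|+\lvert\{v\notin C:\text{$v$ not dominated}\}\rvert$ is a slightly cleaner packaging of the paper's separate forward/backward arguments.
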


\begin{proof}
We give a reduction from \probDominating problem to \probMedian problem. Let $(G = (V, E),k)$ be the given instance of \probDominating. We assume that there is no dominating set in $G$ of size at most $k-1$. This assumption is without loss of generality, since we can use the following reduction iteratively for $k' = 1, 2, \ldots, k$, which only incurs a polynomial overhead.

Now we construct an instance $(X, d)$ of \probMedian as follows. First, let $X = V(G)$, i.e., we treat each vertex of the graph as a point in the metric space, and we use the terms vertex and point interchangeably. Recall that the graph $G = (V, E)$ is unweighted, but we suppose that the weight of every edge in $E(G)$ is $1$. Then, we let $d$ be the shortest path metric in $G$. The following observations are immediate.

\begin{observation}\label{obs:domset-metric}\ 
	\begin{itemize}
		\item For all $u \in V(G)$, $d(u, u) = 0$.
		\item For all distinct $u, v \in V(G)$, $d(u, v) = 1 \iff (u,v) \in E(G)$, and $d(u, v) \ge 2 \iff (u, v) \not\in E(G)$.
	\end{itemize}
\end{observation}

We now show that there is a dominating set of size $k$ iff there is a $k$-median clustering of cost exactly $n-k$.

In the forward direction, let $S \subseteq V(G)$ be a dominating set of size $k$. We obtain the corresponding $k$-median clustering as follows. We let $S = \{c_1, c_2, \ldots, c_k\}$ to be the set of centers. For a center $c_i \in S$, we define $X'_i = N[c_i]$. Since $S$ is a dominating set, every vertex in $V(G) \setminus S$ has a neighbor in $S$. Therefore, $\bigcup_{1 \le i \le k} X'_i = V(G)$. Now, we remove all \emph{other} centers except for $c_i$ from the set $X'_i$. Furthermore, if a vertex belongs to multiple $X'_i$'s, we arbitrarily keep it only a single $X'_i$. Let $\{X_1, X_2, \ldots, X_k\}$ be the resulting partition of $V(G)$. Observe that in the resulting clustering, centers pay a cost of zero, whereas every other vertex has a center at distance $1$. Therefore, the cost of the clustering is exactly $n-k$.

In the other direction, let $(S, \{X_1, X_2, \ldots, X_{k}\})$ be a given $k$-median clustering of cost $n-k$. We claim that $S$ is a dominating set of size $k$. Consider any vertex $u \in V(G) \setminus S$, and suppose $u \in X_i$ corresponding to the center $c_i$. Since $u \not\in S$, $d(u, S) \ge d(u, c_i) \ge 1$. This holds for all $n-k$ points of $V(G) \setminus S$. Now, if $u \in C_i$, and $d(u, c_i) > 1$ for some vertex $u \in V(G) \setminus S$, then this contradicts the assumption that the given clustering has cost $n-k$. This implies that every $u \in V(G) \setminus S$ has a center in $S$ at distance exactly $1$, i.e., $u$ has a neighbor in $S$. This concludes the proof.

This reduction takes polynomial time.
Observe that the number of points in the resulting instance is equal to $n$, the number of vertices in $G$.
Therefore, if there is an algorithm for \probMedian with running time subexponential in the number of points $n$ then it would give a $2^{o(n)}$ time algorithm for \probDominating, which would refute ETH, via Proposition~\ref{proposition:EthlowerboundforDom}.
\end{proof}
\section{SeCoCo Hardness}
In this section, we consider the variant of \probMedian, which we call \probMedianFacility.
Recall that in this problem, we are given a metric space $(X \cup F,d)$, where $X$ is a set of $n$ clients, $F$ is a set of $m$ centers and integer $k>0$. The goal is to select a set $C \subseteq F$ of $k$ centers and assign each client in $X$ to a center in $C$, such that the $k$-median cost of clustering is minimized.

We show that there is no algorithm solves  \probMedianFacility problem in time $\Oh(2^{(1-\epsilon)n} \poly(m))$, for every fixed $\epsilon>0$.
For this, we use the \probSeCoCo by Cygan et al. \cite{Cygan2016}. 

The decision version of \probSetC problem is defined as follows.

\defproblema{\probSetC}%
{Given a universe $\mathcal{U}=\{u_1,\ldots,u_n\}$ of $n$ elements and a family $\mathcal{S}=\{S_1,\ldots,S_m\}$ of $m$ subsets of $\mathcal{U}$ and an integer $k$  }%
{Determine whether there is a set cover of size at most $k$. 
}

To state \probSeCoCo \cite{Cygan2016} more formally, 
let \probDeltaSetC denote the \probSetC problem where all the sets have size at most $\Delta >0$.

\begin{conjecture}
	\probSeCoCo (SeCoCo)\cite{Cygan2016}. For every fixed $\epsilon > 0$ there is $\Delta(\epsilon) > 0$, such that no algorithm (even randomized) solves \probDeltaSetC in time $\Oh(2^{(1-\epsilon)n} \cdot \poly(m))$.
\end{conjecture}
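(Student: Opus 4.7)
The ``statement'' that concludes the excerpt is the Set Cover Conjecture (SeCoCo), a complexity-theoretic hypothesis rather than a claim with a known proof. Proving SeCoCo unconditionally would, in particular, entail showing super-polynomial algorithmic lower bounds for a natural \classNP problem, which is far beyond the reach of current techniques and would imply $\classP \neq \classNP$ as a by-product. So any realistic proof plan must be interpreted as a plan to derive SeCoCo from a more widely believed assumption, most naturally the Strong Exponential Time Hypothesis (SETH). The plan I would pursue is therefore the conditional one: show that if SETH holds, then SeCoCo holds.

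The first step would be to fix a small $\epsilon > 0$ and assume, towards a contradiction, that there exists some $\delta > 0$ such that for \emph{every} $\Delta$, $\Delta$-\probSetC can be solved in time $\Oh(2^{(1-\delta)n}\cdot \poly(m))$, where $n$ is the universe size. I would then take, for each $q \ge 3$, an instance of $q$-SAT over $n_{\text{sat}}$ variables and $m_{\text{sat}}$ clauses, and design a reduction to a $\Delta(q,\epsilon)$-\probSetC instance whose universe has size $n_{\text{sat}} + o(n_{\text{sat}})$, with only polynomially many sets. The cleanest target would be a ``tight'' reduction in the style of Cygan--Dell--Lokshtanov--Marx--Nederlof--Okamoto--Paturi--Saurabh--Wahlstr\"om: group variables into blocks of size $\log(1/\epsilon)$, use sets to encode partial assignments to each block consistent with clause satisfiability, and rely on the bounded set size to keep $\Delta$ independent of $n_{\text{sat}}$. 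Composed with the hypothetical fast Set Cover algorithm, such a reduction would yield a $2^{(1-\delta')n_{\text{sat}}}$ algorithm for $q$-SAT for every $q$, contradicting SETH.

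The hard part, and the reason this plan has not yet been carried out in the literature, is precisely this tight reduction. Known reductions from SAT to Set Cover either inflate the universe (encoding clauses as elements), which destroys the exponent, or rely on gadgets that force $\Delta$ to grow with $n_{\text{sat}}$, which violates the $\Delta(q,\epsilon)$ requirement. A secondary route would be to bypass SETH entirely and attack lower bounds in restricted algorithmic models that capture the known inclusion--exclusion / subset-convolution algorithm for Set Cover, and then attempt to lift such lower bounds to the general model; but separating restricted from general models of computation is itself a notorious barrier.

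In summary, the main obstacle is fundamental: there is no current technique that yields an exponent-tight reduction from SAT to bounded-degree Set Cover, and an unconditional proof would resolve central open problems in fine-grained complexity. For this reason the paper, following Cygan et al., adopts SeCoCo as a working hypothesis; the actual proofs in the remainder of Section~5 will assume SeCoCo and deduce conditional lower bounds for \probMedianFacility, rather than attempting to establish the conjecture itself.
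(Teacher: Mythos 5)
You are right that this is a conjecture, not a theorem: the paper states SeCoCo as an unproven hypothesis imported from Cygan et al.\ and offers no proof, exactly as you describe. Your recognition that the statement is to be assumed rather than established (and that even a SETH-conditional derivation remains open) matches the paper's treatment, so there is nothing to reconcile.
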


Using this result, we show the following.

\begin{theorem}
	Assuming \probSeCoCo, for any fixed $\epsilon > 0$, there is no $\Oh(2^{(1-\epsilon)n}\cdot\poly(m))$ time algorithm for \probMedianFacility, where $n$ is the number of clients.
\end{theorem}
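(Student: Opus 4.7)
The plan is to give a polynomial-time reduction from \probDeltaSetC to \probMedianFacility that preserves the number of clients, so that a hypothetical $\Oh(2^{(1-\epsilon)n}\poly(m))$-time algorithm for \probMedianFacility would translate into an algorithm of the same running time for \probDeltaSetC, contradicting the \probSeCoCo. Crucially, the reduction must be parameter-preserving: the universe of size $n$ must correspond to the client set of size $n$ (not $n+m$), because the conjecture only forbids algorithms whose exponent is $(1-\epsilon)n$ with $n$ being the universe size.

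Given an instance $(\mathcal{U}, \mathcal{S}, k)$ of \probDeltaSetC with $|\mathcal{U}|=n$ and $|\mathcal{S}|=m$, I would construct an instance of \probMedianFacility as follows. Take the client set to be $X = \mathcal{U}$ and the facility set to be $F = \mathcal{S}$, and keep the parameter $k$ unchanged. Define the distances by
\[
d(u, S) = \begin{cases} 1 & \text{if } u \in S, \\ 3 & \text{if } u \notin S, \end{cases} \qquad d(u, u') = 2 \text{ for distinct } u, u' \in X, \qquad d(S, S') = 2 \text{ for distinct } S, S' \in F,
\]
and $d(x,x) = 0$ for all $x \in X \cup F$. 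The first substantial step is to verify that $d$ is a metric. Symmetry and the identity of indiscernibles are immediate; the triangle inequality needs to be checked case by case on the types of the three points involved. The only non-trivial case is when exactly two of them are of different types (client/facility); all distances lie in $\{1, 2, 3\}$ and any two of them sum to at least the third, so the triangle inequality holds.

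The second step is to establish the equivalence: $(\mathcal{U}, \mathcal{S})$ admits a set cover of size at most $k$ if and only if the constructed \probMedianFacility instance admits a clustering of cost at most $n$. For the forward direction, given a cover $\mathcal{C} \subseteq \mathcal{S}$ of size at most $k$, open exactly those facilities (padding with arbitrary facilities if fewer than $k$) and assign each client $u \in \mathcal{U}$ to any opened facility whose corresponding set contains $u$; each client pays exactly $1$, for a total cost of $n$. For the converse, observe that in any feasible solution every client pays at least $1$ (since client-facility distances are in $\{1,3\}$), so the total cost is at least $n$; if the cost equals $n$, then every client pays exactly $1$, which means every $u \in \mathcal{U}$ lies in some chosen set, i.e.\ the chosen facilities form a set cover of size at most $k$.

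The final step is to tie the contradiction together. Given any $\epsilon > 0$, the \probSeCoCo supplies a $\Delta(\epsilon)$ such that \probDeltaSetC has no $\Oh(2^{(1-\epsilon)n}\poly(m))$-time algorithm. The reduction above runs in polynomial time and produces a \probMedianFacility instance with exactly $n$ clients and exactly $m$ facilities. Consequently, a hypothetical $\Oh(2^{(1-\epsilon)n}\poly(m))$-time algorithm for \probMedianFacility would, composed with the reduction, solve \probDeltaSetC in the forbidden running time. The main conceptual obstacle is the parameter bookkeeping rather than the combinatorics: one must choose the distances (specifically, the ``1 vs.\ 3'' gap combined with ``2'' between same-type points) so that the reduction is metric, that the cost threshold $n$ is attained \emph{exactly} in the \yesinstance case, and that we do not inflate the client count by introducing auxiliary gadget clients, which would weaken the bound to $2^{(1-\epsilon)(n+m)}$ and make the reduction useless.
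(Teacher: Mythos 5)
Your proposal is correct and takes essentially the same approach as the paper: reduce from set cover by letting the universe elements be the clients and the sets be the facilities, arranging for client--facility distance $1$ precisely when the element is in the set, and arguing that a clustering of cost exactly $n$ corresponds to a set cover of size at most $k$. The only (cosmetic) difference is that the paper obtains the metric as the shortest-path metric of the bipartite element--set incidence graph with unit edge weights, whereas you write down an explicit $\{1,2,3\}$-valued metric directly; your version has the small advantage of sidestepping any worry about the incidence graph being disconnected.
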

\begin{proof}
	We give a reduction from \probSetC to \probMedianFacility problem.
	
	Given an instance  $(\mathcal{U},\mathcal{S})$ of \probSetC problem, where $\mathcal{U}=\{u_1,\ldots,u_n\}$ and $\mathcal{S}=\{S_1,\ldots,S_m\}$, such that $S_i \subseteq \mathcal{U}$,
	we create an instance of \probMedianFacility by building a bipartite graph $G=((X\cup F),E)$ as follows.
	\begin{itemize}
		\item For each element $u_i \in U$, we create a client, say $x_i$, for $i \in \{1,\ldots,n\}$. Denote $X=\{x_1,\ldots,x_n\}$.
		\item For each  set $S_i \in \mathcal{S}$, we create a center, say $c_i$, for $i \in \{1,\ldots,m\}$. Denote $F=\{c_1,\ldots,c_m\}$.
		\item For every $i \in \{1,\ldots,n\}$ and every $j \in \{1,\ldots,m\}$, if $u_i \in S_j$, then connect corresponding $x_i$ and $c_j$  with an edge of weight $1$, i.e., client $x_i$ pays cost $1$ when assigned to facility $c_j$.
	\end{itemize}
	This finishes the construction of $G$.
	Now, let $d$ be the shortest path metric in graph $G$.
	
	We show that there is set cover of size  at most $k$ if and only if there is $k$-median clustering of cost $n$.
	
	In the forward direction, assume there is a set cover $\mathcal{S'}\subseteq \mathcal {S}$ of size at most $k$.
	Assume $\mathcal{S'}=\{S_1,\ldots,S_k\}$.
	For a set $S_i$, we make the corresponding vertex $c_i \in F$ a center. 
	Then, we create its corresponding cluster $X_i$ as follows.
	We add all the points $x_j$ such that $(c_ix_j) \in E$.
	Finally, we make the clusters ${X_i}$ pairwise disjoint, by arbitrarily choosing exactly one cluster for every client, if the client is present in multiple clusters.
	Clearly, $\{X_1,\ldots,X_k\}$ is a partition of $X$.
	We now calculate the cost of the obtained $k$-median clustering.
	$$\sum_{i=1}^{k}\sum_{x \in X_i}d(c_i,x)=\sum_{i=1}^{k}|X_i|= n.$$
	
	In the reverse direction, suppose there is a $k$-median clustering $\{X_1,\ldots,X_k\}$ of $X$ of cost $n$.
	Let $C=\{c_1,\ldots,c_k\} \subseteq F$ be a set of centers.
	Every client must be at distance at least $1$ from its corresponding center.
	We claim that each client in a cluster is at distance exactly $1$ from its corresponding center.
	Suppose not, then there exists a client with distance strictly greater than $1$ from its center. 
	The total number of clients is $n$.
	This contradicts that the cost of $k$-median clustering is $n$.
	Thus, every element is chosen in some set corresponding to set $C$.
	Therefore, a subfamily  $\mathcal{S'}\subseteq \mathcal{S}$ corresponding to set $C$ forms a cover of $\mathcal{U}$.
	Since $|C|=k$, $\mathcal{S'}$ is a cover of $\mathcal{U}$ of size at most $k$.

	Clearly, this reduction takes polynomial time.
	Furthermore, observe that the number of clients in the resulting instance is same as the number of elements in $\mathcal{U}$.
	Therefore, if there is an $\Oh(2^{(1-\epsilon)n}\cdot\poly(m))$ time algorithm for \probMedianFacility then it would give a $\Oh(2^{(1-\epsilon)n}\cdot\poly(m))$ time algorithm for \probSetC, which, in turn, refutes \probSeCoCo.
\end{proof}
We briefly note that the same hardness construction also shows a similar hardness result for the ``supplier'' versions of \probMeans and \probCenter. 

\section{A \texorpdfstring{$2^{n} \cdot \poly(m, n)$}{2\^n poly(m, n)} time Algorithm for \texorpdfstring{$k$}{k}-Median Facility Location} \label{sec:subset-conv}

Let $(X \cup F, d)$ be a given instance of \probMedianFacility, where $n = |X|$ denotes the number of clients, and $m = |F|$ denotes the number of centers. In this section, we give a $2^{n} \cdot\poly(m, n)$-time exact algorithm, under a mild assumption that any distance in the input is a non-negative integer that is bounded by a polynomial in the input size. \footnote{Since the integers are encoded in binary, this implies that the length of the encoding of any distance is $\Oh(\log(m) + \log(n))$.} Let $M \coloneqq n \cdot D$, where $D$ denotes the maximum inter-point distance in the input. Note that $M = \poly(m, n)$.

We define $k$ functions $\cost_1, \cost_2, \ldots, \cost_k: 2^X \to M$, where $\cost_i(Y)$ denotes the minimum cost of clustering the clients of $Y$ into at most $i$ clusters. In other words, $\cost_i(Y)$ is the optimal $i$-\textsc{Median Facility Location} cost, restricted to the instance $(Y \cup F, d)$. First, notice that $\cost_1(Y)$ is simply the minimum cost of clustering all points of $Y$ into a single cluster. This value can be computed in $\Oh(mn)$ time by iterating over all centers in $F$, and selecting the center $c$ that minimizes the cost $\sum_{p \in Y} d(p, c)$. Thus, the values $\cost_1(Y)$ for all subsets $Y \subseteq X$ can be computed in $\Oh(2^n mn)$ time. Next, we have the following observation.
\begin{observation}\label{obs:fl-dp}
	For any $Y \subseteq X$ and for any $1 \le i \le k$, $$\cost_i(Y) = \min_{\substack{A \cup B = Y\\A \cap B = \emptyset}} \cost_{i-1}(A) + \cost_1(B).$$
\end{observation}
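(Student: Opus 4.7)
The plan is to prove equality by establishing two inequalities. The statement is a standard partition-style dynamic programming recurrence: the best way to cluster $Y$ with at most $i$ clusters is to split off one cluster and recurse on the remainder. Nothing deep is required — the proof is purely bookkeeping about how clustering costs decompose over a bipartition of the point set.

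For the $(\le)$ direction, I would fix any partition $Y = A \sqcup B$ and exhibit a feasible $i$-clustering of $Y$ witnessing cost at most $\cost_{i-1}(A) + \cost_1(B)$. Take an optimal solution of $i$-\textsc{Median Facility Location} on $(A \cup F, d)$ using at most $i-1$ clusters (cost $\cost_{i-1}(A)$), and an optimal $1$-clustering of $B$ with some center $c \in F$ (cost $\cost_1(B)$). Their union is a clustering of $Y$ using at most $(i-1)+1 = i$ clusters from $F$, with cost exactly $\cost_{i-1}(A) + \cost_1(B)$. Taking minimum over all partitions $(A,B)$ yields $\cost_i(Y) \le \min_{A \sqcup B = Y} \cost_{i-1}(A) + \cost_1(B)$. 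The only boundary case to mention is $Y = \emptyset$, where both sides are $0$ using $A = B = \emptyset$.

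For the $(\ge)$ direction, I would start with an optimal clustering $(C^*, \{X_1^*, \ldots, X_j^*\})$ of $Y$ with $j \le i$ clusters, with centers $c_1^*, \ldots, c_j^* \in F$ and cost $\cost_i(Y)$. If $Y = \emptyset$ the claim is immediate; otherwise $j \ge 1$, so designate $B \coloneqq X_j^*$ and $A \coloneqq Y \setminus B = X_1^* \cup \cdots \cup X_{j-1}^*$. The clusters $\{X_1^*, \ldots, X_{j-1}^*\}$ with centers $c_1^*, \ldots, c_{j-1}^*$ form a feasible clustering of $A$ using at most $i-1$ clusters, so its cost is at least $\cost_{i-1}(A)$; similarly, $X_j^*$ with center $c_j^*$ is a feasible $1$-clustering of $B$, so its cost is at least $\cost_1(B)$. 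Summing these two bounds gives $\cost_i(Y) \ge \cost_{i-1}(A) + \cost_1(B)$, which is at least the right-hand minimum.

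There is no real obstacle — the only mild care is in the degenerate cases (empty $Y$, or $i=1$ where one needs $\cost_0(\emptyset) = 0$ and $\cost_0 = +\infty$ elsewhere so that the unique feasible partition is $A = \emptyset$, $B = Y$) and in noting that ``at most $i$'' (rather than ``exactly $i$'') on both sides makes the combination $(i-1)+1 \le i$ and the split $j \to (j-1)+1$ automatically compatible. With both inequalities established, the observation follows.
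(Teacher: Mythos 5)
Your proof is correct and is the natural two-direction argument for this partition-style DP recurrence; the paper states the observation without proof, so there is nothing to compare against beyond confirming your argument matches the intended one. Both directions are sound: for $(\le)$ you correctly note that combining a feasible $(i-1)$-clustering of $A$ with a feasible $1$-clustering of $B$ yields a feasible clustering of $Y$ with at most $i$ clusters (possibly fewer, if a facility is reused, which is harmless precisely because $\cost_i$ is defined with ``at most''), and for $(\ge)$ the split of an optimal $j$-clustering of $Y$ ($j \le i$) into its last cluster and the rest is exactly right. The one point you flag — that for $i=1$ the right-hand side refers to $\cost_0$, which the paper never defines — is a genuine small gap in the paper's exposition; your convention $\cost_0(\emptyset)=0$ and $\cost_0(Y)=+\infty$ for $Y\neq\emptyset$ is the standard fix and makes the recurrence hold uniformly.
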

Note that since we are interested in clustering of $Y$ into \emph{at most} $i$ clusters, we do not need to ``remember'' the set of facilities realizing $\cost_{i-1}(A)$ and $\cost_1(B)$ in Observation \ref{obs:fl-dp}. Next, we discuss the notion of subset convolution that will be used to compute $\cost_{i}(\cdot)$ values that is faster than the na\"ive computation.

\medskip\noindent\textbf{Subset Convolutions.} Given two functions $f, g: 2^X \to \mathbb{Z}$, the \emph{subset convolution} of $f$ and $g$ is the function $(f * g) : 2^X \to \mathbb{Z}$, defined as follows.
\begin{equation}
	\displaystyle\forall Y \subseteq X: \qquad (f * g)(Y) = \sum_{\substack{A \cup B = Y\\A \cap B = \emptyset}} f(A) \cdot g(B) \label{eqn:subset-conv}
\end{equation}
It is known that, given all the $2^n$ values of $f$ and $g$ in the input, all the $2^n$ values of $f * g$ can be computed in $\Oh(2^n \cdot n^3)$ arithmetic operations, see e.g., Theorem 10.15 in the Parameterized Algorithms book \cite{cygan2015parameterized}. This is known as \emph{fast subset convolution}. Now, let $(f \oplus g)(Y) = \min_{\substack{A \cup B = Y\\A \cap B = \emptyset}} f(A) + g(B)$. We observe that $f \oplus g$ is equal to the subset convolution $f * g$ in the integer min-sum semiring $(\mathbb{Z} \cup \{\infty\}, \min, +)$, i.e., in \Cref{eqn:subset-conv}, we use the mapping $+ \mapsto \min$, and $\cdot \mapsto +$. This, combined with a simple ``embedding trick'' enables one to compute all values of $f \oplus g: 2^X \to \{-N, \ldots, N\}$ in time $2^{n} n^{\Oh(1)} \cdot \Oh(N \log N \log \log N)$ using fast subset convolution -- see Theorem 10.17 of \cite{cygan2015parameterized}. Finally, \Cref{obs:fl-dp} implies that $\cost_i$ is exactly $\cost_{i-1} \oplus \cost_1$, and we observe that the function values are upper bounded by $n \cdot D = M$. We summarize this discussion in the following proposition.
\begin{proposition}\label{lem:subset-conv}
	Given all the $2^n$ values of $\cost_{i-1}$ and $\cost_1$ in the input, all the $2^n$ values of $\cost_i$ can be computed in time $2^n n^{\Oh(1)} \cdot \Oh(M \log M \log\log M)$.
\end{proposition}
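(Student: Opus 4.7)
The plan is to reduce the computation of $\cost_i$ to a single invocation of min-sum subset convolution, and then appeal directly to the fast subset convolution machinery from~\cite{cygan2015parameterized}. First, I would point out that \Cref{obs:fl-dp} literally says $\cost_i = \cost_{i-1} \oplus \cost_1$ pointwise, so the problem of producing all $2^n$ entries of $\cost_i$ from the two input tables is, by definition, one min-sum subset convolution instance on functions $2^X \to \mathbb{Z} \cup \{\infty\}$.

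The second step would be to pin down the numerical range of the tables, which is needed to instantiate the theorem. Because every pairwise distance is a non-negative integer at most $D$ and every cluster contains at most $n$ clients, any finite value of $\cost_j(Y)$ is a non-negative integer bounded by $nD = M$; the sentinel value $+\infty$ (used for subsets $Y$ that cannot be covered by $j$ facilities under the problem's constraints, if any) can safely be replaced by the finite value $M+1$, since in the outer $\min$ of \Cref{obs:fl-dp} any such contribution is dominated by a genuine decomposition whenever one exists. Thus both input tables and the desired output table take values in $\{0,1,\ldots,M+1\}$.

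The third and final step is to apply Theorem~10.17 of~\cite{cygan2015parameterized}, which computes min-sum subset convolution of two tables $f, g : 2^X \to \{-N,\ldots,N\}$ in time $2^{|X|}|X|^{\Oh(1)} \cdot \Oh(N \log N \log\log N)$. Plugging in $|X| = n$ and $N = M+1 = \Oh(M)$ yields the claimed bound $2^{n} n^{\Oh(1)} \cdot \Oh(M \log M \log\log M)$.

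I do not expect a genuine obstacle: the entire content is the identification of $\cost_i$ as a min-sum subset convolution together with the polynomial range bound. For completeness I would remark on why Theorem~10.17 looks the way it does -- internally it uses the \emph{embedding trick} of encoding each value $v \in \{0,\ldots,N\}$ by the monomial $z^{v}$, which turns the $(\min,+)$ semiring into the ordinary $(+,\cdot)$ polynomial ring. Running fast subset convolution of Bj\"orklund et al.~\cite{bjorklund2009set} on the resulting polynomial-valued tables and then reading off, for each $Y$, the smallest exponent with a nonzero coefficient recovers $(f \oplus g)(Y)$; the $\Oh(\log M \log\log M)$ factor accounts for fast arithmetic on the $\Oh(\log M)$-bit coefficients. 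Invoking this black box finishes the proof.
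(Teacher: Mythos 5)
Your proposal is correct and follows essentially the same route as the paper: identify $\cost_i = \cost_{i-1} \oplus \cost_1$ via Observation~\ref{obs:fl-dp}, bound the table entries by $M = nD$, and invoke Theorem~10.17 of~\cite{cygan2015parameterized} as a black box. The extra remark about replacing $+\infty$ by $M+1$ is harmless but unnecessary here, since $\cost_1(Y)$ is finite for every $Y$ (including $\cost_1(\emptyset)=0$), so all table entries are already in $\{0,\ldots,M\}$.
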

Using \Cref{lem:subset-conv}, we can compute all the $2^n$ values of $\cost_2(\cdot)$, using the pre-computed values $\cost_1(\cdot)$. Then, we can use the values of $\cost_2(\cdot)$ and $\cost_1(\cdot)$ to compute the values of $\cost_3(\cdot)$. By iterating in this manner $k-1 \le n$ times, we compute the values of $\cost_k(\cdot)$ for all $2^n$ subsets of $k$, and the overall time is upper bounded by $2^n mn^{\Oh(1)} \cdot \Oh(M \log M \log\log M) $, which is $2^n \cdot \poly(m, n)$, if $M = \poly(m, n)$. Note that $\cost_k(X)$ corresponds to the optimal cost of \probMedianFacility. Finally, the computed values of the functions $\cost_i(\cdot)$ can be used to also compute a clustering $\{X_1, X_2, \ldots, X_k\}$ of $X$, and the corresponding centers $\{c_1, c_2, \ldots, c_k\}$. We omit the straightforward details.
\begin{theorem}\label{thm:}
	\probMedianFacility can be solved optimally in $2^n \cdot \poly(m, n)$ time, assuming the distances are integers that are bounded by polynomial in the input size.
\end{theorem}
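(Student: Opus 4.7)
The plan is to assemble the pieces already laid out in this section into a single iterative dynamic program. The state space is the family of functions $\cost_i : 2^X \to \{0,1,\ldots,M\}$ for $i = 1, 2, \ldots, k$. I would first compute $\cost_1(Y)$ for every $Y \subseteq X$ by brute force: for each subset $Y$ and each candidate center $c \in F$, evaluate $\sum_{p \in Y} d(p, c)$ and take the minimum over $c \in F$. This costs $\Oh(2^n m n)$ in total, and populates the base layer of the DP.

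Next, I would compute the layers $\cost_2, \cost_3, \ldots, \cost_k$ in order. By \Cref{obs:fl-dp}, each layer is exactly the min-sum subset convolution $\cost_i = \cost_{i-1} \oplus \cost_1$, so I can invoke \Cref{lem:subset-conv} directly to obtain all $2^n$ values of $\cost_i$ from the previously computed layers in time $2^n n^{\Oh(1)} \cdot \Oh(M \log M \log\log M)$. After $k-1 \le n$ such convolutions, the table $\cost_k(\cdot)$ is fully populated, and the optimal objective value of the given instance is simply $\cost_k(X)$.

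For the running time, I would bound the total work by $\Oh(2^n m n) + (k-1) \cdot 2^n n^{\Oh(1)} \cdot \Oh(M \log M \log\log M)$. Since $k \le m$ and by hypothesis $M = n \cdot D = \poly(m,n)$, each factor in the second term is polynomial in $m$ and $n$, so the whole expression simplifies to $2^n \cdot \poly(m,n)$ as required. The assumption that distances are polynomially bounded integers is used exactly here, to ensure that the min-sum embedding used inside the fast subset convolution operates over a polynomially sized range of values.

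Finally, to recover not just the cost but the clustering $\{X_1,\ldots,X_k\}$ and the centers $\{c_1,\ldots,c_k\}$, I would store, alongside each $\cost_i(Y)$ value, a witness split $(A, B)$ with $A \cup B = Y$, $A \cap B = \emptyset$ achieving the minimum in \Cref{obs:fl-dp}, and similarly a witness center for each $\cost_1(B)$ value. Backtracking from $(k, X)$ through these witnesses yields the clustering and its centers in additional polynomial time. The main subtlety I anticipate is only bookkeeping: making sure the witness splits are recorded without blowing up the asymptotic cost of the fast subset convolution, which can be handled in a standard way by recomputing the minimizing split for each output entry after the convolution step, or by carrying a light-weight pointer structure through the convolution. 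Since no truly new ideas beyond \Cref{obs:fl-dp} and \Cref{lem:subset-conv} are needed, there is no serious obstacle beyond this routine reconstruction.
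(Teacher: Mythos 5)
Your proposal is correct and matches the paper's approach essentially verbatim: brute-force computation of $\cost_1$, the recurrence from \Cref{obs:fl-dp}, iterated fast min-sum subset convolution via \Cref{lem:subset-conv}, and reading off $\cost_k(X)$, with the polynomial bound on distances used precisely to control the range $M$ in the embedding. The only addition is your explicit witness-reconstruction sketch, which the paper leaves to the reader.
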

Note that the algorithm does not require the underlying distance function to satisfy the triangle inequality. In particular, we obtain an analogous result the ``facility location'' version of the \probMeans objective. Finally, the algorithm works for $k$-\textsc{supplier}, which is a similar version of \probCenter. However, in this case there is a much simpler reduction to \probSetC which gives an $2^n \cdot \poly(m, n)$ time algorithm. For this, we first ``guess'' the optimal radius $r$, and define a set system that consists of balls of radius $r$ around the given centers. We omit the details.




\bibliography{kmed}

\end{document}